\newtheorem{thm}{Theorem}
  \newtheorem{lemma}[thm]{Lemma}
 \newtheorem{corollary}[thm]{Corollary}
\def\d{\mathrm{d}}
 \def\R{\mathbb{R}}
\renewcommand\[{\begin{equation}} 
\renewcommand\]{\end{equation}}
\begin{document}

\title{Spatial and Temporal Correlation of the Interference in ALOHA Ad
Hoc Networks}

\author{Radha Krishna Ganti and Martin Haenggi\\
Department of Electrical Engineering\\
University of Notre Dame\\
Notre Dame, IN 46556, USA\\
\{rganti,mhaenggi\}@nd.edu}
\maketitle
\begin{abstract}
Interference is a main limiting factor of the performance of a wireless
ad hoc network. The temporal and the spatial correlation of the interference
makes the outages correlated temporally (important for retransmissions) and spatially
correlated (important for routing). In this letter we quantify the temporal and
spatial correlation of the interference in a wireless ad hoc network 
whose nodes are distributed as a Poisson point process on the plane
when ALOHA is used as the  multiple-access scheme.
\end{abstract}

\section{Introduction}

Interference in a wireless ad hoc network is a spatial phenomenon
which depends on the set of transmitters, the path loss, and the fading.
The presence of common randomness in the locations of the interferers
induces temporal and spatial correlations in the interference, even
for ALOHA.  These correlations affect the retransmission strategies
and the routing. In the literature, these correlations are generally
neglected for the purpose of analytical tractability and   because
these correlations do not change the scaling behavior of an ad hoc
wireless network. For example, in \cite{bacelli-aloha} and \cite{haenggi2005rrr},
the spatial correlations are neglected for the purpose of routing.
Also extending results like the transmission capacity \cite{weber:2005}
from a single-hop to a multi-hop scenario   requires  taking  the
spatio-temporal correlations into account. In this letter we quantify
the spatial and temporal correlations of the interference and the
link outages for ALOHA.

\section{System Model}

We model the location of the nodes (radios) as a Poisson point process
(PPP) $\phi=\left\{ x_{1},x_{2},\ldots\right\} \subset\mathbb{R}^{2}$
of density $\lambda$. We assume that all the nodes transmit with
unit power and that  the fading is spatially and temporally independent
with unit mean. The (power) fading coefficient between two pairs of
nodes $x$ and $y$ at time instant $n$ is denoted by $h_{xy}(n)$.
The large scale path loss function is denoted by $g(x)$ and is assumed
to have the following properties: 
\begin{enumerate}
\item Depends only on $\|x\|$.
\item Monotonically decreases with $\|x\|$.
\item Integrable: \begin{equation}
\int_{0}^{\infty}xg(x)\d x<\infty.\label{eq:integrable}\end{equation}

\end{enumerate}
For example, a valid path loss model is given by \begin{equation}
g_{\epsilon}(x)=\frac{1}{\epsilon+\|x\|^{\alpha}},\ \epsilon\in(0,\infty),\ \alpha>2.\label{eq:11}\end{equation}
We can model the standard singular path loss model $g(x)=\|x\|^{-\alpha}$
by considering the limit $\lim_{\epsilon\rightarrow0}g_{\epsilon}(x)$.
The interference at time instant $m$ and (spatial) location $z$
is given by \begin{equation}
I_{k}(z)=\sum_{x\in\phi}\mathbf{1}(x\in\phi_{k})h_{xz}(k)g(x-z).\label{eq:22}\end{equation}
where $\phi_{k}$ denotes the transmitting set at time $k$. We assume
that the MAC protocol used is ALOHA where each node decides to transmit
independently with probability $p$ in each slot.
\section{Spatio-Temporal Correlation of Interference\label{sec:Temporal-Correlation.}}
In a wireless system the transmitting set changes at every time slot
because of the MAC scheduler. Since the transmitting sets at different
time slots are chosen from $\phi$ (a common source of randomness),
the interference exhibits temporal  and spatial correlation.  Since ALOHA
chooses the transmitting sets identically across time, $I_{k}(u)$
is identically distributed for all $k$. Since nodes transmit independently
of each other in ALOHA, the transmitting set $\phi_{k}\subset\phi$
is also spatially stationary, and hence $I_{k}(u)\stackrel{d}{=}I_{k}(o)$
where $\stackrel{d}{=}$ denotes equality in distribution and $o$
denotes the origin in $\R^2$. Hence we have 
\begin{eqnarray}
\mathbb{E}I_{k}(u) & = & \mathbb{E}I_{k}(o)\nonumber \\
 & \stackrel{(a)}{=} & \mathbb{E}\sum_{x\in\phi}\mathbf{1}(x\in\phi_{k})h_{xo}(k)g(x)\nonumber \\
 %& = & p\mathbb{E}[h]\lambda\int_{\R^{2}}g(x)\d x\nonumber \\
 & \stackrel{(b)}{=} & p\lambda\int_{\R^{2}}g(x)\d x,\label{eq:average_main}\end{eqnarray}
where $(a)$ follows from Campbell's theorem \cite{stoyan} and $(b)$
follows since $\mathbb{E}[h]=1$. The second moment of the interference
is given by \begin{eqnarray}
\mathbb{E}[I_{k}(o)^{2}] & = & \mathbb{E}\left[\left(\sum_{x\in\phi_{k}}h_{xo}(k)g(x)\right)^{2}\right]\nonumber \\
 & = & \mathbb{E}\sum_{x\in\phi_{k}}h_{xo}^{2}(k)g^{2}(x)\nonumber \\
 &  & +\mathbb{E}\sum_{x,y\in\phi_{k}}^{x\neq y}h_{xo}(k)h_{yo}(k)g(x)g(y)\nonumber 
\end{eqnarray}
\begin{eqnarray}
 & \stackrel{(a)}{=} & p\mathbb{E}[h^{2}]\lambda\int_{\R^{2}}g^{2}(x)\d x\nonumber \\
 &  & +p^{2}\mathbb{E}[h]^{2}\lambda^{2}\int_{\R^{2}}\int_{\R^{2}}g(x)g(y)\d x\d y.\label{eq:avg2}\end{eqnarray}
where $(a)$ follows from the independence of $h_{xo}(k)$ and $h_{yo}(k)$
and the second-order product density formula of the Poisson point
process \cite{stoyan}. When the fading follows a Nakagami-$m$%
\footnote{The distribution is given by \[
F(x)=1-\frac{\Gamma_{\text{ic}}(m,mx)}{\Gamma(m)},\]
where $\Gamma_{\text{ic}}$ denotes the incomplete gamma function. %
} distribution and the path loss model is  given by $g_{\epsilon}(x)$,
the variance of the interference follows from \eqref{eq:average_main}
and \eqref{eq:avg2}   and is given by \[
\text{Var}\left[I_{k}(o)\right]=\frac{2\pi^{2}(\alpha-2)p\lambda}{\epsilon^{2-2/\alpha}\alpha^{2}\sin(2\pi/\alpha)}\frac{m+1}{m},\]
 and the mean product of $I_{k}(u)$ and $I_{l}(v)$ at times $k$
and $l,\ k\neq l$ is given by \begin{eqnarray}
 &  & \mathbb{E}[I_{k}(u)I_{l}(v)]\nonumber \\
 & = & \mathbb{E}\left[\sum_{x\in\phi_{k}}h_{xu}(k)g(x-u)\sum_{y\in\phi_{l}}h_{yv}(l)g(y-v)\right]\nonumber \\
 & = & p^{2}\mathbb{E}[h]^{2}\lambda\int_{\R^{2}}g(x-u)g(x-v)\d x\nonumber \\
 &  & +\mathbb{E}\sum_{x,y\in\phi}^{x\neq y}\mathbf{1}(x\in\phi_{k})\mathbf{1}(y\in\phi_{l})h_{xu}(k)h_{yv}(l)g(x)g(y).\nonumber  
\end{eqnarray}
By Campbell's theorem and the second order product density of a PPP, we have
\begin{eqnarray}
 \mathbb{E}[I_{k}(u)I_{l}(v)]\nonumber& = & p^{2}\mathbb{E}[h]^{2}\lambda\int_{\R^{2}}g(x-u)g(x-v)\d x\nonumber \\
 &  & +\lambda^{2}p^{2}\mathbb{E}[h]^{2}\int_{\R^{2}}\int_{\R^{2}}g(x)g(y)\d x\d y\\
 & = & p^{2}\lambda\int_{\R^{2}}g(x-u)g(x-v)\d x\label{eq:cross_correlation} \\
 &  & +\lambda^{2}p^{2}\left(\int_{\R^{2}}g(x)\d x\right)^{2}.\end{eqnarray}

\begin{lemma}
\label{lem:The-spatio-temporal-correlation}The spatio-temporal correlation
coefficient of the interferences $I_k(u)$ and $I_l(v), k\neq l$,  when the path loss function $g(x)$
satisfies \eqref{eq:integrable} is given by \begin{equation}
\zeta(u,v)=\frac{p\int_{\R^{2}}g(x)g(x-\|u-v\|)\d x}{\mathbb{E}[h^{2}]\int_{\R^{2}}g^{2}(x)\d x}.\label{eq:spatial-cor}\end{equation}
\end{lemma}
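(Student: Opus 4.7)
The plan is to obtain $\zeta(u,v)$ directly from the definition
\[
\zeta(u,v)=\frac{\mathbb{E}[I_k(u)I_l(v)]-\mathbb{E}[I_k(u)]\,\mathbb{E}[I_l(v)]}{\sqrt{\text{Var}[I_k(u)]\,\text{Var}[I_l(v)]}},
\]
assembling each piece from moments that have already been computed in the paragraphs preceding the lemma. The stationarity observation $I_k(u)\stackrel{d}{=}I_k(o)$ (noted in the excerpt) is what makes this assembly clean: it lets me replace every marginal moment at $u$ or $v$ with the corresponding quantity at the origin.

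First I would write $\mathbb{E}[I_k(u)]=\mathbb{E}[I_l(v)]=p\lambda\int_{\R^2}g(x)\,\mathrm{d}x$ from \eqref{eq:average_main}, and read off
$\text{Var}[I_k(o)]=p\,\mathbb{E}[h^2]\lambda\int_{\R^2}g^2(x)\,\mathrm{d}x$
by subtracting the square of \eqref{eq:average_main} from \eqref{eq:avg2}; stationarity gives the same variance at $u$ and at $v$, so the denominator is simply $p\,\mathbb{E}[h^2]\lambda\int g^2$. Next I would take \eqref{eq:cross_correlation} and subtract the product $\mathbb{E}[I_k(u)]\mathbb{E}[I_l(v)]=p^2\lambda^2\bigl(\int g\bigr)^2$; the ``disjoint-pair'' term coming from the second-order product density of the PPP cancels exactly, leaving the covariance
\[
\text{Cov}[I_k(u),I_l(v)]=p^2\lambda\int_{\R^2}g(x-u)g(x-v)\,\mathrm{d}x.
\]
Dividing numerator by denominator collapses one factor of $p$ and the $\lambda$, yielding the ratio in \eqref{eq:spatial-cor} up to the form of the integral.

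The only remaining step, and the one that needs a small argument rather than algebra, is to rewrite $\int g(x-u)g(x-v)\,\mathrm{d}x$ as $\int g(x)g(x-\|u-v\|)\,\mathrm{d}x$ with the vector $\|u-v\|$ understood (by the usual abuse of notation for radial functions) as any planar vector of that length. I would do this in two moves: translate by $u$ to get $\int g(y)g(y-(v-u))\,\mathrm{d}y$, then use the radial symmetry of $g$ (property~1, $g$ depends only on $\|\cdot\|$) together with a rotation of $\R^2$ that sends $v-u$ to the vector $(\|u-v\|,0)$; since Lebesgue measure is rotation-invariant and $g$ is radial, the integral is unchanged.

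The only potential obstacle is really bookkeeping: making sure the ``diagonal'' part of \eqref{eq:avg2} (the $\mathbb{E}[h^2]$ term) and the ``off-diagonal'' part of \eqref{eq:cross_correlation} (the $\mathbb{E}[h]^2$ term) are correctly identified as variance and covariance respectively, using temporal independence of the fading ($k\neq l$) and ALOHA independence to ensure the $h_{xu}(k)h_{yv}(l)$ terms factorize. Integrability \eqref{eq:integrable} is needed only to guarantee that every integral appearing above is finite, so the ratio is well-defined. No deeper tool is required; the lemma follows by rearrangement.
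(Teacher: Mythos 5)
Your proposal is correct and follows essentially the same route as the paper: both assemble $\zeta(u,v)$ from the precomputed mean, second moment, and cross-moment, observe that the product-density terms cancel in the covariance, and then use translation plus radial symmetry of $g$ to reduce the integral to the stated form. Your treatment of the final change of variables (translation followed by a rotation, justified by rotation-invariance of Lebesgue measure) is just a more explicit version of the paper's one-line substitution argument.
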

\begin{proof}
Since $I_{k}(u)$ and $I_{l}(v)$ are identically distributed, we
have \[
\zeta(u,v)=\frac{\mathbb{E}[I_{k}(u)I_{l}(v)]-\mathbb{E}[I_{k}(u)]^{2}}{\mathbb{E}[I_{k}(u)^{2}]-\mathbb{E}[I_{k}(u)]^{2}}.\]
Since $I_{k}(u)\stackrel{d}{=}I_{k}(o)$ and by substituting
for the above quantities we have,\begin{eqnarray}
\zeta(u,v) & = & \frac{p\int_{\R^{2}}g(x-u)g(x-v)\d x}{\mathbb{E}[h^{2}]\int_{\R^{2}}g^{2}(x)\d x}\nonumber \\
 & \stackrel{(a)}{=} & \frac{p\int_{\R^{2}}g(x)g(x-\|u-v\|)\d x}{\mathbb{E}[h^{2}]\int_{\R^{2}}g^{2}(x)\d x},\label{eq:cross-corr}\end{eqnarray}
where $(a)$ follows by using the substitution $y=x-u$ and the fact
that $g(x)$ depends only on $\|x\|$.
\end{proof}
We have the following result about the temporal correlation by setting
$\|u-v\|=0$.
\begin{corollary}
The temporal correlation coefficient with ALOHA as the MAC protocol
and is given by \begin{equation}
\zeta_{t}=\frac{p}{\mathbb{E}[h^{2}]}.\label{eq:temporal-cor}\end{equation}
When the fading is Nakagami-$m$, the correlation coefficient is $\zeta_{t}=\frac{pm}{m+1}$. In particular, for $m=1$ (Rayleigh fading), the temporal correlation coeffecient is $p/2$ and  for $m\rightarrow \infty$ (no fading), the temporal correlation coeffecient is $p$.
\end{corollary}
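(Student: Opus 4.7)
The plan is to specialize Lemma~\ref{lem:The-spatio-temporal-correlation} to the case $u=v$, so that $\|u-v\|=0$. Because $g$ depends only on $\|x\|$ (property~1 of the path loss function), the integrand $g(x)g(x-\|u-v\|)$ in the numerator of \eqref{eq:spatial-cor} collapses to $g^{2}(x)$, which is exactly the integral in the denominator. These integrals cancel and leave $\zeta_{t}=p/\mathbb{E}[h^{2}]$, establishing the first claim.

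The next step is to evaluate $\mathbb{E}[h^{2}]$ under Nakagami-$m$ fading. The CDF quoted in the footnote is that of a Gamma random variable with shape $m$ and rate $m$, normalized so that $\mathbb{E}[h]=1$; equivalently, $\mathrm{Var}(h)=1/m$. From the standard Gamma moment identity one gets $\mathbb{E}[h^{2}]=1+1/m=(m+1)/m$, and substituting into $\zeta_{t}=p/\mathbb{E}[h^{2}]$ yields $\zeta_{t}=pm/(m+1)$.

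Finally, the two limiting regimes follow by direct substitution: for $m=1$ the fading power $h$ is exponential with unit mean (Rayleigh fading on the envelope), giving $\zeta_{t}=p/2$; for $m\to\infty$ the variance $1/m$ vanishes so $h$ converges in distribution to the constant $1$ (no fading), which makes $m/(m+1)\to 1$ and hence $\zeta_{t}\to p$. There is no serious obstacle in the argument: the corollary is a direct consequence of Lemma~\ref{lem:The-spatio-temporal-correlation} together with the elementary second-moment computation for a normalized Gamma random variable, and the two special cases are just substitutions. The only point that deserves care is matching the CDF in the footnote with the Gamma$(m,m)$ parametrization so that the normalization $\mathbb{E}[h]=1$ used throughout the paper is preserved.
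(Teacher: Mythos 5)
Your proposal is correct and follows exactly the route the paper intends: the paper derives this corollary simply by setting $\|u-v\|=0$ in Lemma~\ref{lem:The-spatio-temporal-correlation}, after which the numerator and denominator integrals coincide, and the Nakagami-$m$ case reduces to the second moment $\mathbb{E}[h^{2}]=(m+1)/m$ of a unit-mean Gamma$(m,m)$ variable. Your treatment of the two limiting cases $m=1$ and $m\to\infty$ is likewise the same direct substitution the paper uses.
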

We first observe that the correlation increases with increasing $m$,
i.e., fading decreases correlation which is intuitive. Observe that
in the above derivation, $\int_{\R^{2}}g^{2}(x)\d x$ is not defined
when $g(x)=\|x\|^{-\alpha}$, but we can use $g_{\epsilon}(x)$ and
take $\epsilon\rightarrow0$. We now find the correlation for the
singular path-loss model as a limit of $g_{\epsilon}(x)$.
\begin{corollary}
Let the path loss model be given by $g_{\epsilon}(x)=1/(\epsilon+\|x\|^{\alpha})$.
We then have \[
\lim_{\epsilon\rightarrow0}\zeta(u,v)=0,\ \ u\neq v.\]
\end{corollary}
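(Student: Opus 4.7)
My plan is to apply Lemma~\ref{lem:The-spatio-temporal-correlation} with $g=g_\epsilon$ and to compare the $\epsilon\to 0$ asymptotics of its numerator and denominator. The key structural observation is that the singularity of $g_\epsilon^2$ is concentrated at a single point ($x=0$), whereas for $u\neq v$ the integrand $g_\epsilon(x)g_\epsilon(x-(u-v))$ has its two singularities at \emph{distinct} points; I therefore expect the numerator to diverge strictly more slowly than the denominator, by a full factor of $\epsilon$.

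The denominator is essentially already computed in the variance formula above the lemma: the polar-coordinate substitution $r=\epsilon^{1/\alpha}s$ yields $\int_{\R^2} g_\epsilon^2(x)\,\d x = \Theta\bigl(\epsilon^{-(2-2/\alpha)}\bigr)$, which blows up as $\epsilon\to 0$ since $\alpha>2$. For the numerator, I would set $d=\|u-v\|>0$ and split $\R^2$ into $A_1=B(0,d/2)$, $A_2=B(u-v,d/2)$, and $A_3=\R^2\setminus(A_1\cup A_2)$. On $A_3$ both factors are pointwise dominated by $\|\cdot\|^{-\alpha}$ with arguments bounded away from the two singularities, so the $A_3$ contribution is bounded by a finite, $\epsilon$-independent integral (finiteness at infinity uses $\alpha>2$). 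On $A_1$ the second factor is uniformly bounded by $(d/2)^{-\alpha}$, which reduces the bound to
\[
\int_{A_1} g_\epsilon(x)\,\d x \;=\; 2\pi\epsilon^{2/\alpha-1}\int_0^{(d/2)\epsilon^{-1/\alpha}} \frac{s\,\d s}{1+s^\alpha} \;=\; \Theta\!\bigl(\epsilon^{-(1-2/\alpha)}\bigr),
\]
where the $s$-integral converges on $[0,\infty)$ precisely because $\alpha>2$; $A_2$ is symmetric. Hence the numerator is $O\bigl(\epsilon^{-(1-2/\alpha)}\bigr)$, and dividing by the denominator gives $\zeta(u,v) = O(\epsilon)\to 0$.

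The only delicate step is verifying that the near-singularity regions $A_1$ and $A_2$ do not secretly match the denominator's divergence rate $\epsilon^{-(2-2/\alpha)}$; the saving of one full power of $\epsilon$ comes exactly from the fact that in each of these regions only one of the two path-loss factors is near its singularity, with the other controlled by a constant depending on $d$ but not on $\epsilon$. All remaining steps reduce to standard integrability checks that use $\alpha>2$.
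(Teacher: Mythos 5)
Your proof is correct, and it is organized differently from the paper's. The paper applies the single substitution $x\mapsto\epsilon^{1/\alpha}x$ to \emph{both} integrals in \eqref{eq:spatial-cor}, which turns the denominator into an $\epsilon$-independent constant and the numerator into $\int_{\R^{2}}(1+\|x-u\epsilon^{-1/\alpha}\|^{\alpha})^{-1}(1+\|x-v\epsilon^{-1/\alpha}\|^{\alpha})^{-1}\,\d x$, whose two bumps drift apart as $\epsilon\to0$; the proof then simply asserts that this integral tends to zero. You instead leave the integrals unscaled, extract the exact divergence rate $\Theta(\epsilon^{-(2-2/\alpha)})$ of the denominator, and bound the numerator by $O(\epsilon^{-(1-2/\alpha)})$ via the three-region split $B(o,d/2)$, $B(u-v,d/2)$, and the complement --- the same scaling substitution appears, but only in the one-dimensional radial integral over each near-singularity disc. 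Your estimates check out: on the outer region both factors are bounded by $\|\cdot\|^{-\alpha}$ away from the singularities and the tail is integrable since $2\alpha>2$, and on each disc freezing the far factor at $(d/2)^{-\alpha}$ correctly reduces the bound to $\int_{B(o,d/2)}g_{\epsilon}=\Theta(\epsilon^{-(1-2/\alpha)})$. What your route buys is twofold: it supplies the justification that the paper's final step leaves implicit (the vanishing of the separated-bumps integral, which would otherwise need its own splitting or domination argument), and it yields the quantitative decay rate $\zeta(u,v)=O(\epsilon)$ for fixed $u\neq v$ rather than just the limit. What the paper's route buys is brevity and a cleaner picture of \emph{why} the correlation dies: after rescaling, the two interference kernels become asymptotically disjointly supported.
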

\begin{proof}
We have \begin{eqnarray*}
\zeta(u,v) & = & \lim_{\epsilon\rightarrow0}\frac{p\int_{\R^{2}}g_{\epsilon}(x-u)g_{\epsilon}(x-v)\d x}{\mathbb{E}[h^{2}]\int_{\R^{2}}g_{\epsilon}^{2}(x)\d x}\\
 & \stackrel{(a)}{=} & \lim_{\epsilon\rightarrow0}\frac{p\int_{\R^{2}}\frac{1}{1+\|x-u\epsilon^{-1/\alpha}\|^{\alpha}}\frac{1}{1+\|x-v\epsilon^{-1/\alpha}\|^{\alpha}}\d x}{\mathbb{E}[h^{2}]\int_{\R^{2}}\left(\frac{1}{1+\|x\|^{\alpha}}\right)^{2}\d x}\\
 & = & 0,\end{eqnarray*}
where $(a)$ follows from change of variables.
\end{proof}
The correlation coefficient being $0$ is an artifact of the singular
path loss model. When the path loss is $\|x\|^{-\alpha}$, the nearest
transmitter is the main contributor to the interference. So for $u\neq v$,
the interference as viewed by $u$ is dominated by transmitters in
a disc $B(u,\delta),\delta>0$ of radius $\delta$ centered at $u$
and for $v$ dominated by transmitters in $B(v,\delta)$ for small
$\delta$. The transmitters locations being independent in $B(v,\delta)$
and $B(u,\delta)$ for a PPP, makes the correlation-coefficient go
to zero. A more powerful metric like mutual information would be better able to 
capture the dependence of interference for the singular path loss
model. In Figure \ref{Flo:spacial_corr}, the spatial correlation
is plotted as a function of $\|u-v\|$ for different $\epsilon$.

\begin{figure}
\begin{centering}
\includegraphics[width=0.7\columnwidth]{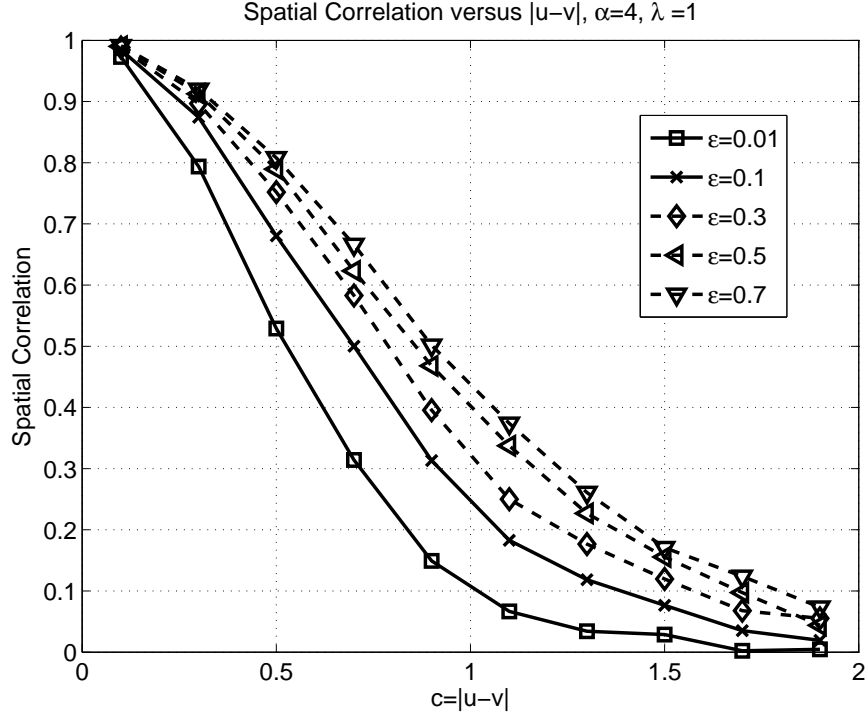}
\par\end{centering}

\caption{Spatial correlation $\zeta(u,v)/p$ versus $\|u-v\|$, when the path-loss
model is given by $g_{\epsilon}(x)$, $\lambda=1$ and $\alpha=4$.
We observe that $\zeta_{s}(u,v)\rightarrow0,\ u\neq v$, for $\epsilon\rightarrow0$.}

\label{Flo:spacial_corr}
\end{figure}

\section{Temporal Correlation of Link Outages}

In the standard analysis of retransmissions in a  wireless ad hoc system, the link failures are assumed to be  uncorrelated across time. But this is not so, since the interference is temporally
correlated. We now provide the conditional probability of link formation
assuming a successful transmission. 

We assume that a transmitter at the origin has a destination located
at $z\in\R^{2}$. Let $A_{k}$ denote the event that the origin is
able to connect to its destination $z$ at time instant $k$, i.e.,
\[
\text{SIR}=\frac{h_{oz}(k)g(z)}{I_{k}(z)}>\theta.\]
For simplicity we shall assume the fading is Rayleigh (similar methods
can be used for Nakagami-$m$). We now provide the joint probability
of success $\mathbb{P}(A_{k},A_{l}),\ k\neq l$. We have \begin{eqnarray}
\mathbb{P}(A_{k},A_{l}) & = & \mathbb{P}\left(h_{oz}(k)>aI_{k}(z),h_{oz}(l)>aI_{l}(z)\right)\nonumber \\
 & \stackrel{(a)}{=} & \mathbb{E}\left[\exp(-aI_{k}(z))\exp(-aI_{l}(z))\right]\nonumber \\
 & = & \mathbb{E}[\exp(-a\sum_{x\in\phi}g(x)[\mathbf{1}(x\in\phi_{k})h_{xz}(k)\nonumber \\
 &  & +\mathbf{1}(x\in\phi_{l})h_{xz}(l)])]\nonumber \\
 & \stackrel{(b)}{=} & \mathbb{E}\left[\prod_{x\in\phi}\left(\frac{p}{1+ag(x)}+1-p\right)^{2}\right]\label{eq:outages}\\
 & \stackrel{(c)}{=} & \exp\left(-\lambda\int_{\R^{2}}1-\left(\frac{p}{1+ag(x)}+1-p\right)^{2}\d x\right),\nonumber \end{eqnarray}
where $a=\theta/g(z)$. $(a)$ follows from the independence of $h_{oz}(k)$
and $h_{oz}(l),k\neq l$, $(b)$ follows by taking the average with
respect to $h_{xz}(k),\ h_{xz}(l)$ and the ALOHA, $(c)$ follows
from the probability generating functional of the PPP. Similarly we
have \begin{eqnarray*}
\mathbb{P}(A_{l}) & = & \exp\left(-\lambda\int_{\R^{2}}1-\left(\frac{p}{1+ag(x)}+1-p\right)\d x\right).\end{eqnarray*}
 So the ratio of conditional and the unconditional probability is
given by \begin{eqnarray}
\frac{\mathbb{P}(A_{k}|A_{l})}{\mathbb{P}(A_{l})} & = & \frac{\mathbb{P}(A_{k},A_{l})}{\mathbb{P}(A_{l})^{2}}\nonumber \\
 & = & \exp\left(\lambda p^{2}\int_{\R^{2}}\left(\frac{ag(x)}{1+ag(x)}\right)^{2}\d x\right)\nonumber \\
 & > & 1.\label{eq:ineq}\end{eqnarray}
When $g(x)=\|x\|^{-\alpha}$, we have \[
\frac{\mathbb{P}(A_{k}|A_{l})}{\mathbb{P}(A_{l})}=\exp\left(2\lambda a^{2/\alpha}p^{2}\pi^{2}\frac{(\alpha-2)}{\alpha^{2}}\csc\left(\frac{2\pi}{\alpha}\right)\right).\]
In Figure we plot the conditional and the unconditional link success
probabilities. %
\begin{figure}
\begin{centering}
\includegraphics[width=0.7\columnwidth]{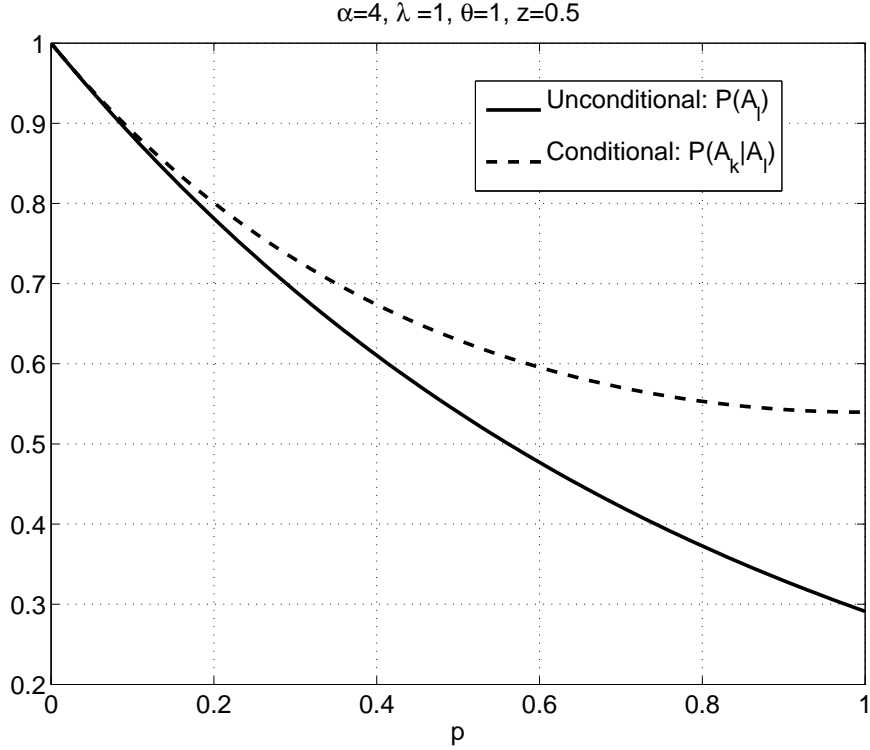}
\par\end{centering}

\caption{$\mathbb{P}(A_{k}|A_{l})$ and $\mathbb{P}(A_{l})$ versus the ALOHA
parameter $p$. $\lambda=1,$ $g(x)=\|x\|^{-4},$ $z=0.5,$ $\theta=1$. }

\end{figure}
We make the following observations:
\begin{enumerate}
\item From \eqref{eq:ineq}, we observe that the link formation is correlated
across time.
\item If a transmission succeeds at a time instant $m$, there is a higher
probability that a transmission succeeds at a time instant $n$. 
\item From \eqref{eq:ineq}, we also have $\mathbb{P}(A_{k}^{c}|A_{l}^{c})>\mathbb{P}(A_{l}^{c})$.
So a  link  in outage is always more likely to be in outage and hence
the retransmission strategy should reduce the rate of transmission
or change the density of transmitters rather than  retransmit "blindly".
\item We observe that $\frac{\mathbb{P}(A_{k}|A_{l})}{\mathbb{P}(A_{l})}$
always increases with $\theta,\lambda,p$. The increase in $\lambda$
and $p$ is because of the larger transmit set due to which the probability
of the same sub-set of nodes transmitting at different times increases,
thereby causing more correlation. When $\theta$ is large, the outage
is a result of the interfering transmissions caused by a larger number
of nodes. Hence by a similar reasoning as above, the correlation increases. 
\end{enumerate}

\section{Conclusions}

In this paper, we have derived the spatial and temporal correlations
of interference in an ALOHA wireless network. We also have proved
that the link outages are temporally correlated. This fact should
be taken into account when analyzing ad hoc performance  and  designing   retransmission strategies.
\section*{Acknowledgments}
The support of the NSF (grants CNS 04-47869, CCF 728763 ) and the
DARPA/IT-MANET program  is gratefully 
acknowledged.

\bibliographystyle{IEEEtran}
\bibliography{point_process}

% Generated by IEEEtran.bst, version: 1.12 (2007/01/11)
\begin{thebibliography}{1}
\providecommand{\url}[1]{#1}
\csname url@samestyle\endcsname
\providecommand{\newblock}{\relax}
\providecommand{\bibinfo}[2]{#2}
\providecommand{\BIBentrySTDinterwordspacing}{\spaceskip=0pt\relax}
\providecommand{\BIBentryALTinterwordstretchfactor}{4}
\providecommand{\BIBentryALTinterwordspacing}{\spaceskip=\fontdimen2\font plus
\BIBentryALTinterwordstretchfactor\fontdimen3\font minus
  \fontdimen4\font\relax}
\providecommand{\BIBforeignlanguage}[2]{{%
\expandafter\ifx\csname l@#1\endcsname\relax
\typeout{** WARNING: IEEEtran.bst: No hyphenation pattern has been}%
\typeout{** loaded for the language `#1'. Using the pattern for}%
\typeout{** the default language instead.}%
\else
\language=\csname l@#1\endcsname
\fi
#2}}
\providecommand{\BIBdecl}{\relax}
\BIBdecl

\bibitem{bacelli-aloha}
F.~Baccelli, B.~Blaszczyszyn, and P.~Muhlethaler, ``An {ALOHA} protocol for
  multihop mobile wireless networks,'' \emph{IEEE Transactions on Information
  Theory}, no.~2, Feb 2006.

\bibitem{haenggi2005rrr}
M.~Haenggi, ``{On routing in random Rayleigh fading networks},'' \emph{IEEE
  Transactions on Wireless Communications}, vol.~4, no.~4, pp. 1553--1562,
  2005.

\bibitem{weber:2005}
S.~Weber, X.~Yang, J.~Andrews, and G.~de~Veciana, ``{Transmission capacity of
  wireless ad hoc networks with outage constraints},'' \emph{Information
  Theory, IEEE Transactions on}, vol.~51, no.~12, pp. 4091--4102, 2005.

\bibitem{stoyan}
D.~Stoyan, W.~S. Kendall, and J.~Mecke, \emph{Stochastic Geometry and its
  Applications}, 2nd~ed., ser. Wiley series in probability and mathematical
  statistics.\hskip 1em plus 0.5em minus 0.4em\relax New York: Wiley, 1995.

\end{thebibliography}

\end{document}